\documentclass{amsart}
\pdfoutput=1
\usepackage{amsmath}
\usepackage{graphicx}
\usepackage{placeins}
\usepackage{amsthm}
\usepackage{nicefrac}
\usepackage{bbm}
\usepackage{wrapfig}
\usepackage{amssymb}
\usepackage{pgf,tikz}
\usepackage{mathrsfs}
\usetikzlibrary{arrows,decorations.pathreplacing,patterns}
\usepackage{fullpage}
\usepackage{color}
\usepackage{makecell}
\usepackage{pifont}
\newtheorem{thm}{Theorem}
\newtheorem{cor}{Corollary}
\newtheorem{lem}{Lemma}
\theoremstyle{definition}
\newtheorem{defn}{Definition}
\DeclareMathOperator{\tr}{Tr}

\newcommand{\ket}[1]{{\left\vert{#1}\right\rangle}}
\newcommand{\bra}[1]{{\left\langle{#1}\right\vert}}
\newcommand{\braket}[2]{{\left< {#1} \middle\vert {#2}\right>}}
\newcommand{\ketbra}[1]{{\left\vert {#1}\middle\rangle\middle\langle{#1}\right\vert}}
\newcommand{\sprod}[2]{\left|\left< {#1} \middle| {#2} \right>\right|}
\newcommand*\dif{\mathop{}\!\mathrm{d}}
\makeatletter
\renewcommand*{\@textcolor}[3]{%
  \protect\leavevmode
  \begingroup
    \color#1{#2}#3%
  \endgroup
}
\makeatother

\interfootnotelinepenalty=10000

\begin{document}
\title{How (maximally) contextual is quantum mechanics?}
\author{Andrew W. Simmons}
\address{Department of Physics, Imperial College London, SW7 2AZ.}
\email{andrew@simmons.co.uk}
\begin{abstract}
Proofs of Bell-Kochen-Specker contextuality demonstrate that there exists sets of projectors that cannot each be assigned either 0 or 1 such that each basis formed from them contains exactly one 1-assigned projector. Instead, at least some of the projectors must have a valuation that depends on the \emph{context} in which they are measured. This motivates the question of \emph{how many} of the projectors must have contextual valuations. In this paper, we demonstrate a bound on what fraction of rank-1 projective measurements on a quantum system must be considered to have context-dependent valuations as a function of the quantum dimension, and show that quantum mechanics is not as contextual, by this metric, as other possible physical theories. Attempts to find quantum mechanical scenarios that yield a high value of this figure-of-merit can be thought of as generalisations or extensions of the search for small Kochen-Specker sets. We also extend this result to projector-valued-measures with projectors of arbitrary equal rank.
\end{abstract}
\maketitle
\section{Introduction}

The Bell-Kochen-Specker theorem tells us that there exist sets of projectors onto states $\{\ket{\psi_i}\}$, such that it is impossible to assign to each of them a valuation of 0 or 1 in a context-independent way, and have there be exactly one 1-valued projector in every PVM context. We might ask the following question: under an assumption of outcome-definiteness, how many such projectors \emph{can} we give a noncontextual valuation to, and how many projectors (or more robustly, what \emph{fraction} of projectors), must be considered to have context-dependent valuations? We will be looking for an assignment to all projection operators onto a Kochen-Specker set $\{\ketbra{\psi_i}\}\rightarrow \{0,C,1\}$ such that the following rules hold: in any basis formed from the projectors onto $\{\ketbra{\psi_i}\}$, there must be at least one with a $C$- or $1$-valuation, and no more than one with a $1$-valuation. The fraction which must be given a $C$-valuation is the key figure-of-merit that will be explored in this paper. 

Historically, Kochen-Specker style proofs of contextuality, also known as strong or maximal contextuality, have tended to have a ``lynch pin" quality; removal of any part of their structure causes the proof to fall apart in its entirety. In fact, this fact has been exploited in order to tame the maximal contextuality demonstrated by the Peres-Mermin magic square for all qubit states \cite{Berm2016}, as we need only remove the ability to measure one context before it fails to be a proof of contextuality at all. For such proofs, then, since removing one context is sufficient to remove all contextuality, certainly allowing a projector to vary in its value assignment by context  (by assigning it a $C$) will. This minimalism is partially motivated by the fact that since the original paper by Kochen and Specker, there has been interest in trying to find examples of small Kochen-Specker sets \cite{Aren2011}. When only a single projector need be considered to take a contextual valuation, finding a small such set is the only way of attaining a higher fraction of contextual projectors. The figure-of-merit introduced above, then, can be seen as a generalisation or successor to the search for small Kochen-Specker sets.

In this paper, we will use graph-theoretic methods as a computational tool to investigate how robust, in this sense, quantum-mechanically accessible Kochen-Specker proofs can be. We will demonstrate that it is possible that other physical theories can be more contextual, given this definition, than quantum mechanics. We can consider this as a stronger analogue than the observation that a PR-box is more nonlocal with respect to the CHSH inequality than any quantum-mechanically accessible scenario.

One metric for contextuality is the \emph{contextual fraction} \cite{Abra2017}, in which we consider our probability distribution to be a convex mixture between a noncontextual theory and a contextual one. This notion is what prompts the term ``maximal contextuality''; if a scenario is maximally contextual then it has a contextual fraction that is the highest possible, \emph{viz} 1. It is independent under a  re-labelling the measurement effects, but the contextual fraction is hard to calculate when the number of corners of the noncontextual polytope is large, which is true in many natural situations such as in nonlocality scenarios in which one party has access to measurements with three or more outcomes \cite{SimmCC}. The figure-of-merit explored in this paper acts as a complementary measure of contextuality that can distinguish different scenarios which have the maximum possible contextual fraction of 1. However, it is likely to be hard to calculate in many circumstances, and in the case in which one is using the entirety of Hilbert space as one's vector set, we shall explore the connection to open problems in mathematics known as Witsenhausen's problem \cite{Wits1974} and the Hadwiger-Nelson problem \cite{Soif2008}.


\section{How robust can quantum maximal contextuality be?}\label{sec:main}

Consider the orthogonality graph $\mathcal{G}$ of a maximal nonlocality scenario, in which we identify projectors with vertices, and two vertices $g_1$, $g_2$, identified with $\ket{\psi_1}$, $\ket{\psi_2}$ are joined by an edge iff $\braket{\psi_1}{\psi_2}=0$. Following Renner and Wolf \cite{Renn2004}, we will restrict our attention to outcome assignments in which we do not ever assign a ``1'' valuation to two orthogonal vectors; this is known as a \emph{weak Kochen-Specker set}. A weak Kochen-Specker set can be extended to a full Kochen-Specker set with the addition of $O(|\ket{\psi_i}|^2d)$ extra vectors, or we can consider the incomplete bases as being completed by additional projectors, whose valuations are allowed to be contextual as they are outside the scope of our scenario. In fact, this distinction will not be crucial for any of the main results; it does however make certain definitions conceptually clearer. An \emph{independent set} is a subset of vertices, none of which are joined by an edge. Any noncontextual assignment of ``1''s, then, will form an independent set since we cannot have two orthogonal vectors, (which correspond to two vertices joined by an edge) in our set of vectors which are noncontextually given ``1'' assignments. The size of the largest independent set of a graph $\mathcal{G}$ is denoted $\alpha(\mathcal{G})$ and is called the \emph{independence number}. Another important concept will be that of a clique:

\begin{defn}[Clique]
For a graph $\mathcal{G}=\{V,E\}$, vertex subset $C\subset V$ is a \emph{Clique} if the subgraph induced from $\mathcal{G}$ on $C$ is \emph{complete}, that is, every vertex in $C$ is connected to every other; and $C$ is inclusion-maximal.
\end{defn}

We say that a clique is a \emph{maximum clique} if there is no clique in $\mathcal{G}$ with more vertices. The size of a maximum clique in $\mathcal{G}$ is denoted $\omega(\mathcal{G})$, and so we note that for a $d$-dimensional quantum system, as long as the set of vectors we can measure includes at least one basis, we will have $\omega(\mathcal{G})=d$. In a hidden variable model for a scenario, we require that in every context, there be at least one vector that can take a ``1'' value either contextually or noncontextually. This motivates the following definition:

\begin{defn}[Maximum-clique hitting-set]
For a graph $\mathcal{G}=\{V,E\}$, a vertex subset $T\subset V$ is a \emph{maximum-clique hitting-set} or \emph{maximum-clique transversal} if for all cliques $C$, $|C|=\omega(\mathcal{G})$, $T\cap C\neq\emptyset$.
\end{defn}

In any contextual hidden-variable model, the set of vertices associated with a set of vectors given either noncontextual or contextual ``1'' assignments is a Maximum-clique hitting set, since every context needs an outcome. This allows us to define our first figure-of-merit for a set of vectors forming a Kochen-Specker type proof of the Bell-Kochen-Specker theorem.

\begin{equation}q_s(\mathcal{G})=\min_T \min_{A\subset T} |T|-|A|,\end{equation}
where $T$ is a maximum-clique hitting-set. We see that this is a graph-theoretic representation of the figure-of-merit we posited above: we take a set $T$ of vectors which will recieve a ``1'' valuation, either contextually or noncontextually, and subtract from it the number of those vectors that can support a noncontextual valuation. This quantity then represents the number of vectors that must receive a contextual valuation in any model for the scenario. However, we can note that it is possible to achieve an arbitrarily high value of $q_s$ just by taking an equal number of noninteracting copies of, for example, the Peres-Mermin magic square. In order that we might more directly compare this quantity between graphs of different sizes, we can consider its graph density:
\begin{equation}
q(\mathcal{G})=\frac{q_s(\mathcal{G})}{|V(\mathcal{G})|}.
\end{equation}
We now have a graph-theoretic formulation of our quantity of interest. What values of this quantity are obtainable, either by quantum-mechanically realisable scenarios, or in general probabilistic theories? We note the following trivial lower bound:
\begin{equation}
q_s(\mathcal{G})\geq \min |T|-\alpha(\mathcal{G}).
\end{equation}
We note that in general, it is \textbf{NP}-complete to calculate the value of $\min |T|$ if we know $\omega(\mathcal{G})$, since for triangle-free graphs, this is equivalent to finding a vertex cover, which is \textbf{NP}-complete \cite{Polj1974}. Also, it is \textbf{NP}-complete to calculate $\alpha(\mathcal{G})$ \cite{Gare1979}. This suggests that calculation of this quantity may be computationally difficult. A tension exists in trying to optimise $q(\mathcal{G})$. Particularly well-connected graphs may achieve an exponentially small independence density-- this is true for a few of the families of graphs considered in this paper-- so few vectors can receive a noncontextual assignment of 1, but this same connectedness means that allowing a single vector's valuation to vary contextually ``fixes'' many contexts.

\begin{figure}
\includegraphics[width=\textwidth]{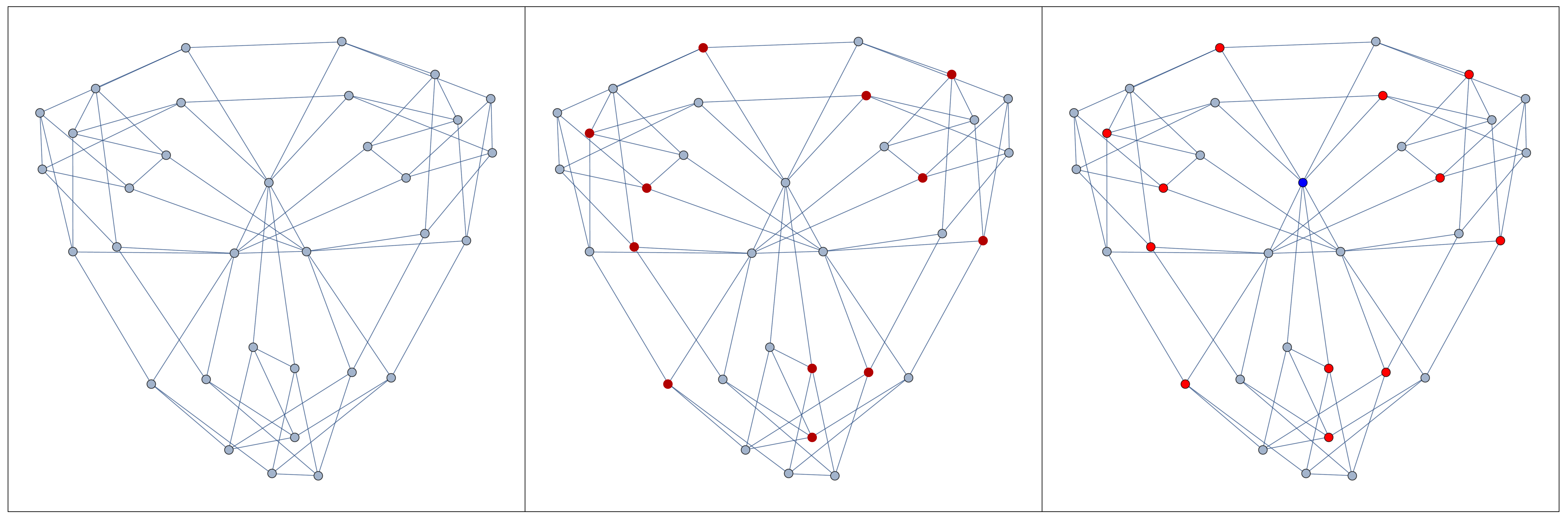}
\caption{Three images illustrating the calculation of the figure of merit. The first is the orthogonality graph for the 33-ray proof of the Kochen-Specker theorem by Peres \cite{Pere1991}. Highlighted in the second in red is a maximal independent set, that represents a maximal assignment of noncontextual 1-valuations. In fact, only a single context lacks a 1-valued vector, visible in the centre of the graph. Highlighted in the third in blue is a projector that can be given a contextual valuation in order that every context contain a ray assigned either a 1 or a contextual valuation; it will take a 1 valuation in the ``central'' context, and a 0 in all others.}
\end{figure}

We will now demonstrate an upper bound on the quantity $q(\mathcal{G})$ for any quantum-mechanically accessible $\mathcal{G}$ in terms of the quantum dimension of the vectors that give rise to it.
\begin{thm}\label{main}
Any orthogonality graph $\mathcal{G}$ induced by a set of vectors $\ket{\psi_i}$ in a $d$-dimensional Hilbert space has
\begin{equation}
q(\mathcal{G}) \leq \left(1-\frac1d\right)^{d-1}-\frac{1}{2^{d-1}}.
\end{equation}
\end{thm}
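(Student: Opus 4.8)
The plan is to prove the bound by a \emph{probabilistic} argument that extracts, from a single Haar-random pure state $\ket{\phi}\in\mathbb{C}^d$, both a maximum-clique hitting-set $T$ and an independent subset $A\subseteq T$ simultaneously, and then to average $|T|-|A|$ over the choice of $\ket{\phi}$. Since $q_s(\mathcal{G})=\min_T\min_{A\subseteq T}|T|-|A|$ is a minimum, \emph{any} valid pair $(T,A)$ already furnishes an upper bound; so it suffices to exhibit one good random family and control its expectation. The two thresholds defining $T$ and $A$ will be chosen precisely so that the two resulting probabilities reproduce the two terms of the claimed expression.

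Concretely, I would fix for each $\ket{\phi}$ the sets
\[
T(\phi)=\left\{\,i:\sprod{\phi}{\psi_i}^2\ge \tfrac1d\,\right\},\qquad A(\phi)=\left\{\,i:\sprod{\phi}{\psi_i}^2> \tfrac12\,\right\},
\]
and verify three structural facts. First, $T(\phi)$ is a maximum-clique hitting-set: a maximum clique of $\mathcal{G}$ is a set of $d$ mutually orthogonal unit vectors, which (since $\omega(\mathcal{G})=d$) is a \emph{complete} orthonormal basis $\{e_k\}$, so $\sum_k\sprod{\phi}{e_k}^2=1$ and at least one term is $\ge 1/d$; hence $T(\phi)$ meets every maximum clique. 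Second, $A(\phi)$ is an independent set: for orthogonal $\psi_i\perp\psi_j$, Bessel's inequality gives $\sprod{\phi}{\psi_i}^2+\sprod{\phi}{\psi_j}^2\le 1$, so both cannot exceed $1/2$, meaning no edge of $\mathcal{G}$ lies inside $A(\phi)$. Third, $A(\phi)\subseteq T(\phi)$ because $1/2\ge 1/d$. Thus $(T(\phi),A(\phi))$ is an admissible pair and $q_s(\mathcal{G})\le |T(\phi)|-|A(\phi)|$ holds pointwise in $\phi$.

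Averaging over a Haar-random $\ket{\phi}$ and using linearity of expectation together with unitary invariance — which makes the inclusion probability of vertex $i$ independent of which unit vector $\psi_i$ is — I would write everything in terms of a single scalar random variable $X=\sprod{\phi}{\psi}^2$ for a fixed unit $\psi$. Since $X$ is the squared modulus of one complex coordinate of a uniform unit vector in $\mathbb{C}^d$, it is $\mathrm{Beta}(1,d-1)$ distributed, with tail $\Pr[X\ge t]=(1-t)^{d-1}$. Hence
\[
q(\mathcal{G})\le \frac{1}{|V|}\,\mathbb{E}_\phi\big[\,|T(\phi)|-|A(\phi)|\,\big]=\Pr\!\left[X\ge\tfrac1d\right]-\Pr\!\left[X>\tfrac12\right]=\left(1-\tfrac1d\right)^{d-1}-\frac{1}{2^{d-1}}.
\]

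The conceptual heart of the argument is the observation that one random vector can serve both roles at once, with the thresholds $1/d$ and $1/2$ engineered to deliver exactly $(1-1/d)^{d-1}$ and $2^{-(d-1)}$. I expect the main obstacle to lie in the hitting-set step: one must genuinely use that every maximum clique is a \emph{full} orthonormal basis (so the overlaps sum to $1$ and the pigeonhole forces an overlap $\ge 1/d$), rather than merely a large set of orthogonal vectors. The remaining quantitative input is the exact overlap law; deriving that $\sprod{\phi}{\psi}^2\sim\mathrm{Beta}(1,d-1)$ is the only real computation, and it is standard. Care with strict versus non-strict thresholds is harmless here, as the continuous distribution places no atom at $1/d$ or $1/2$.
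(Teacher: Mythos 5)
Your proof is correct, and its skeleton coincides with the paper's: the same two thresholds $\nicefrac1d$ and $\nicefrac12$, the same verification that the outer cap hits every maximum clique (pigeonhole on $\sum_k\sprod{\phi}{e_k}^2=1$) while the inner cap is independent (Bessel/triangle inequality), and the same tail law $\Pr[X\ge t]=(1-t)^{d-1}$ for $X\sim\mathrm{Beta}(1,d-1)$ supplying the two terms. Where you genuinely diverge is the execution of the averaging step. The paper fixes the annulus at a reference vector and averages the count of captured points over rotations $g\in SU(d)$; to integrate a finite point set against the Haar measure it replaces each $\ket{\psi_i}$ by an $\epsilon$-bump function, applies Fubini's theorem, twirls the indicator function, enlarges the annulus to $[t_1-\epsilon,t_2+\epsilon]$ to absorb the mollification, and finally sends $\epsilon\to0$. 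Your version --- make the centre $\ket{\phi}$ Haar-random and apply linearity of expectation to the finitely many indicator variables $\mathbbm{1}\left[i\in T(\phi)\right]-\mathbbm{1}\left[i\in A(\phi)\right]$ --- computes exactly the same average with none of that machinery: no mollifiers, no Fubini swap, no limit, and the strict-versus-closed threshold issue is transparently harmless because the distribution is atomless. The two are mathematically equivalent (averaging the annulus position over $SU(d)$ versus averaging its centre over the sphere), but yours is shorter and cleaner, and it makes the first-moment logic explicit: some $\phi$ achieves at most the expectation, and any admissible pair $(T,A)$ upper-bounds the minimum defining $q_s$. One caveat you share with the paper: the hitting-set step requires that every maximum clique be a \emph{complete} orthonormal basis, i.e.\ $\omega(\mathcal{G})=d$; you flag this explicitly, and the paper handles it by the weak Kochen--Specker convention of regarding incomplete bases as completed by auxiliary projectors outside the scenario.
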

\begin{proof}
The proof proceeds by demonstrating an explicit procedure for finding a $T$ and an $A\subset T$, for any $\{\ket{\psi_i}\}$, with upper-bounded worst-case performance.

The set of vectors $\ket{\phi}\in\mathcal{H}_d$ such that $\sprod{\psi}{\phi}^2\geq t$ for some $t\in\mathbb{R}$ forms a closed complex hyperspherical cap centred at $\ket{\psi}$. Consider a basis for $\mathcal{H}_d$. Without loss of generality, we can take this to be the standard basis for the set. By symmetry concerns, the quantity $\min_i \left|\braket{\psi}{i}\right|$ is maximised (albeit nonuniquely) by $\ket{\psi}=d^{\nicefrac{-1}{2}}\sum_i\ket{i}$. This puts an upper bound on the size of the largest hyperspherical angle that can separate an arbitrary vector from its closest basis element. A closed complex hyperspherical cap subtending this angle, $\left\{\ket{\psi}\middle|\sprod{\psi}{\phi}^2\geq \nicefrac{1}{d}\right\}$, will then necessarily capture at least one element from the basis, and so any set of vectors within such a hyperspherical cap $C$ construes a maximum-clique hitting-set, $T_C$.

Since columns of a Haar-random unitary matrix over $\mathbb{C}^d$ are uniformly random unit vectors, the volume $V$ of this hyperspherical cap as a proportion of that of the sphere is given by $1-F(t)$, where $F(t)$ is the cumulative distribution function of the squared  modulus of an entry in such a Haar-random unitary. This quantity is well known in the field of random matrix theory and it can be found, for example, as a special case of a result by \.{Z}yczkowski and Sommers \cite{Zycz2000}, and will be demonstrated in Section \ref{highrank}.

\begin{equation}
\frac{V}{V_S}=\left(1-t \right)^{d-1}.
\end{equation}
Clearly, the value of $q(\mathcal{G})$ for a quantum graph is upper bounded by the size of any maximum-clique hitting set, so if we can bound the minimum number of vectors we must intersect with such a hyperspherical cap, this also constitutes a bound on the value $q(\mathcal{G})$. However, we would not expect this bound to be tight, since we have ignored the independent set subtrahend entirely. We would like to demonstrate an independent set within our maximum-clique hitting-set $T_C$, and therefore give a better upper bound on the quantity $q_s(\mathcal{G})=\min_T \min_{A\subset T} |T|-|A|$, giving an explicit construction procedure for a $T$ and $A\subset T$.

We note that a hyperspherical cap containing the vectors $\ket{\phi}$ such that $\sprod{\psi}{\phi}^2> \nicefrac12$ cannot contain two orthogonal vectors as a simple application of the triangle inequality. We can place this smaller hyperspherical cap anywhere within the larger hyperspherical cap, and this will form an independent set $A$ that is a subset of $T_C$. For the rest of this proof, we will assume that the two hyperspherical caps are centred on the same point; this will make no difference to the worst case scenario which forms the bound. The region of Hilbert space, then, in which vectors count towards the figure of merit, is a member of a two-parameter family of \emph{annuli} on the complex hypersphere, illustrated in Figure \ref{annulus} and defined by
\begin{equation}
t_1\leq\sprod{\psi}{\phi}^2\leq t_2.
\end{equation}
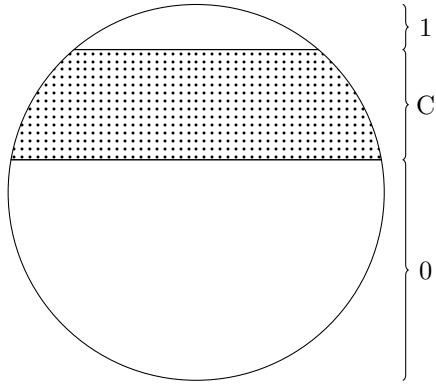
\begin{figure}
\begin{tikzpicture}[line cap=round,line join=round,>=triangle 45,x=2.5cm,y=2.5cm]
\clip(-1.2,-1.2) rectangle (1.4,1.2);
\draw(0.,0.) circle (2.5cm);
\draw (-0.98408,0.173648)-- (0.98408,0.17365);
\draw (-0.649967502437297,0.7599620028497625)-- (0.6499675024372968,0.7599620028497626);
\draw [decorate,decoration={brace,amplitude=2pt,mirror},xshift=0pt,yshift=0pt] (1.1,0.7599620028497626) -- (1.1,1) node [black,midway,xshift=0.3cm,yshift=0.0cm] {1} ;
\draw [decorate,decoration={brace,amplitude=2pt,mirror},xshift=0pt,yshift=0pt] (1.1,0.173648) -- (1.1,0.7599620028497626) node [black,midway,xshift=0.3cm,yshift=0.0cm] {C} ;
\draw [decorate,decoration={brace,amplitude=2pt,mirror},xshift=0pt,yshift=0pt] (1.1,-1) -- (1.1,0.173648) node [black,midway,xshift=0.3cm,yshift=0.0cm] {0} ;
\path[clip] (0,0) circle (1);
\fill[draw=black, pattern=dots] (-0.98408,0.173648)--(-0.98408,0.7599620028497626)--(0.98408,0.7599620028497626)--(0.98408,0.173648) -- cycle;
\end{tikzpicture}
\caption{An illustration of how this construction splits up the Hilbert space into different regions, that can be associated with projectors given noncontextual 1-assignments, contextual projectors, and projectors given noncontextual 0-assignments.}
\label{annulus}
\end{figure}
Such an annulus takes up a proportion 
\begin{equation}p(t_1,t_2)=\left(1-t_1 \right)^{d-1}-\left(1-t_2 \right)^{d-1}\end{equation}
of the Hilbert space. Our construction has $t_1=\nicefrac1d$, and $t_2=\nicefrac12$, giving us a hyperspherical annulus defined by
$\nicefrac1d\leq\sprod{\psi}{\phi}^2\leq \nicefrac12$.
It now remains to be proved that no matter the positions of the set of vectors $\{\psi\}$, there must be some place in which we can place this annulus such that we only intersect strictly less than $p(t_1,t_2)$ of them. 

Consider the canonical action of $SU(d)$ on the unit hypersphere in $\mathbb{C}^d$, which we shall denote $S_\mathbb{C}^{d-1}$. We define a space $SU(d)\times S_\mathbb{C}^{d-1}$, where $SU(d)$ is equipped with the Haar measure, and $S_\mathbb{C}^{d-1}$ is equipped with the uniform measure. Equivalently, we could have taken $H_d$ rather than $S_\mathbb{C}^{d-1}$, in which case it would be equipped with the Fubini-Study metric.

At each $\ket{\psi_i}$, we place a bump function $f^\epsilon_\ket{\psi_i}$ with unit integral and support only within a disc of radius $\epsilon$ around $\ket{\psi_i}$.
Let $A_{t_1,t_2}$ denote the hyperspherical annulus $\{\ket{\phi}\mid t_1\leq\sprod{0}{\phi}^2\leq t_2\}$. By Fubini's theorem, we have:
\begin{equation}
\int_{SU(d)} \int_{S_\mathbb{C}^{d-1}} \mathbbm{1}_{g.z\in A_{t_1,t_2}} \sum_i f^\epsilon_\ket{\psi_i}(z) \dif z \dif g = \int_{S_\mathbb{C}^{d-1}} \int_{SU(d)}  \mathbbm{1}_{g.z\in A_{t_1,t_2}} \sum_i f^\epsilon_\ket{\psi_i}(z) \dif g \dif z
\end{equation}


We note that if we extend our annulus's extent by $\epsilon$, so that we have $t_1\rightarrow t_1-\epsilon$ and $t_2\rightarrow t_2+\epsilon$, we capture the entirety of the measure of the bump functions centred inside the original annulus. Hence, for all $g\in SU(d)$ we have $\left|\left\{\ket{\psi_i} \middle| \ket{\psi_i} \in g^{-1}A_{t_1,t_2}\right\}\right|\leq \int_{S_\mathbb{C}^{d-1}} \mathbbm{1}_{g.z\in A_{t_1-\epsilon,t_2+\epsilon}} \sum_i f^\epsilon_\ket{\psi_i}(z) \dif z $, and so we have
\begin{equation}
\int_{SU(d)} \left|\left\{\ket{\psi_i} \middle| \ket{\psi_i} \in g^{-1}A_{t_1,t_2}\right\}\right| \dif g \leq \int_{S_\mathbb{C}^{d-1}} \int_{SU(d)}  \mathbbm{1}_{g.z\in A_{t_1-\epsilon,t_2+\epsilon}} \sum_i f^\epsilon_\ket{\psi_i}(z) \dif g \dif z,
\end{equation}

Rearranging:
\begin{equation}
\int_{SU(d)} \left|\left\{\ket{\psi_i} \middle| \ket{\psi_i} \in g^{-1}A_{t_1,t_2}\right\}\right| \dif g \leq \sum_i\int_{S_\mathbb{C}^{d-1}}f^\epsilon_\ket{\psi_i}(z)  \int_{SU(d)} \mathbbm{1}_{g.z\in A_{t_1-\epsilon,t_2+\epsilon}} \dif g \dif z \label{eqn:1} 
\end{equation}

Twirling the indicator function $ \mathbbm{1}_{g.z\in A_{t_1-\epsilon,t_2+\epsilon}}$ with respect to the Haar measure over $SU(n)$ reduces it to a constant function with value $p(t_1-\epsilon,t_2+\epsilon)$, the proportion of the Hilbert space taken up by the annulus.

\begin{align}
\int_{SU(d)} \left|\left\{\ket{\psi_i} \middle| \ket{\psi_i} \in g^{-1}A_{t_1,t_2}\right\}\right| \dif g &\leq p(t_1-\epsilon,t_2+\epsilon)\sum_i\int_{S_\mathbb{C}^{d-1}}f^\epsilon_\ket{\psi_i}(z)  \dif z \label{eqn:2}\\
&\leq p(t_1-\epsilon,t_2+\epsilon)\left|\ket{\psi_i}\right|
\end{align}
Since the left hand side of Equation \ref{eqn:2} represents an average over the rotation group, there must be some $g\in SU(d)$ such that $\left|\left\{\ket{\psi_i} \middle| \ket{\psi_i} \in g^{-1}A_{t_1,t_2}\right\}\right|\leq p(t_1-\epsilon,t_2+\epsilon)\left|\ket{\psi_i}\right|$, and this is true for all $\epsilon$. Therefore:
\begin{align}
q(\mathcal{G})&\leq \lim_{\epsilon\rightarrow0}p(t_1-\epsilon,t_2+\epsilon) \\
&\leq \left(1-\frac1d\right)^{d-1}-\frac{1}{2^{d-1}}.
\end{align}

\end{proof} 
\begin{cor}
Any quantum mechanically accessible $\mathcal{G}$ has  $q(\mathcal{G})\leq\nicefrac{4251920575}{11019960576}\sim0.385838$.
\end{cor}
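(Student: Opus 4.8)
The plan is to observe that the corollary is nothing more than the optimisation, over the integer quantum dimension, of the dimension-dependent estimate already supplied by Theorem~\ref{main}. Setting $g(d) := \left(1-\frac1d\right)^{d-1} - \frac{1}{2^{d-1}}$, Theorem~\ref{main} asserts $q(\mathcal{G}) \le g(d)$ whenever $\mathcal{G}$ is realised by vectors in a $d$-dimensional Hilbert space. Since $g(d) \le g(9)$ for every admissible dimension (which is the content I must establish), this yields the uniform bound $q(\mathcal{G}) \le g(9) = \nicefrac{4251920575}{11019960576}$ for every quantum-accessible $\mathcal{G}$ of any dimension, so the whole task reduces to showing $\max_{d\ge 2} g(d) = g(9)$.

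The two terms of $g$ pull in opposite directions. The leading factor $h(d) := \left(1-\frac1d\right)^{d-1}$ decreases monotonically towards its limit $\nicefrac{1}{e}\approx 0.368$, whereas the subtracted penalty $\frac{1}{2^{d-1}}$ also shrinks, towards $0$; thus $g$ increases while the vanishing of the penalty dominates and decreases once the decay of $h$ takes over, producing a single interior maximum. The only analytic input I would need is the strict monotonicity of $h$, which follows from the fact that its logarithmic derivative, namely $\ln\!\left(1-\frac1x\right) + \frac1x$, is negative for all real $x>1$ by the elementary inequality $\ln(1-u)+u<0$ on $u\in(0,1)$.

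With monotonicity in hand, the maximisation splits into a finite computation and a tail estimate. For $2\le d\le 10$ I would evaluate $g(d)$ exactly as rationals; clearing the common denominator at $d=9$ gives $g(9) = \nicefrac{4251920575}{11019960576} \approx 0.385838$, which is the largest of these ten values. For every $d\ge 11$ I would discard the (only beneficial) penalty term and invoke monotonicity of $h$: $g(d) \le h(d) \le h(11) = \left(\nicefrac{10}{11}\right)^{10} = \nicefrac{10^{10}}{11^{10}} \approx 0.385543 < g(9)$, a comparison between two explicit rationals that requires no estimation. Together these rule out every dimension except $d=9$, and combining the finite maximum with the tail bound delivers $q(\mathcal{G})\le g(9)$.

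I expect the only delicate point to be the placement of this cut-off. Because $h(10) = \left(\nicefrac{9}{10}\right)^{9} \approx 0.387420$ still lies above $g(9)$, the crude bound $g(d)\le h(d)$ is not yet decisive at $d=10$, so the exact rational evaluation must be carried all the way to $d=10$ before the monotone tail estimate at $d=11$ becomes effective; verifying that $h$ drops below $g(9)$ precisely in the transition from $d=10$ to $d=11$ is the substantive step, while everything else is bookkeeping with exact rationals.
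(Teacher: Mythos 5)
Your proposal is correct, and it reaches the same reduction as the paper --- the corollary is exactly the maximisation of the Theorem~\ref{main} bound $g(d)=\left(1-\frac1d\right)^{d-1}-\frac{1}{2^{d-1}}$ over admissible dimensions --- but you justify the maximisation differently. The paper treats $d$ as a continuous variable, computes the derivative of $g$, asserts that the maximum lies between $d=9$ and $d=10$, and then compares $g(9)$ with $g(10)$; this is short but leans implicitly on the unimodality of $g$ (so that locating one sign change of the derivative suffices) and on the fact that the integer optimum must sit at one of the two integers bracketing the continuous critical point. You avoid differentiating $g$ altogether: you prove the elementary strict monotonicity of $h(d)=\left(1-\frac1d\right)^{d-1}$ via $\ln(1-u)+u<0$, evaluate $g$ exactly as rationals for $2\le d\le 10$, and dispatch the infinite tail with $g(d)\le h(d)\le h(11)=\nicefrac{10^{10}}{11^{10}}<g(9)$ for $d\ge 11$. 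Your observation that the cut-off must be placed at $d=11$ rather than $d=10$ (since $h(10)=\left(\nicefrac{9}{10}\right)^9\approx 0.38742$ still exceeds $g(9)\approx 0.385838$) is precisely the point the paper's derivative argument glosses over, and it is handled correctly. What your route buys is a fully rigorous, purely finite verification by exact rational comparisons, with no appeal to calculus on the difference of two competing terms; what the paper's route buys is brevity. Both yield the same value $g(9)=\nicefrac{4251920575}{11019960576}$.
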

\begin{proof}
Calculating the derivative of $ \left(1-\frac1d\right)^{d-1}-\frac{1}{2^{d-1}}$ with respect to $d$ shows that it takes a maximum between $d=9$ and $d=10$. Evaluation of the quantity at these points reveal the maximum to be at $d=9$, when we get $\left(1-\nicefrac19\right)^{8}-\nicefrac{1}{2^{8}}=\nicefrac{4251920575}{11019960576}$. This, then, forms a hard limit of $q(\mathcal{G})$ for any quantum mechanically achievable $\mathcal{G}$.
\end{proof}
We note that as the quantum dimension approaches infinity, we achieve a limiting value of $\nicefrac1e$, so quantum systems with very high dimension are viable candidates for providing robust contextuality scenarios. In Figure \ref{boundgraph}, we see the bound on $q(\mathcal{G})$ plotted as a function of $d$.

\begin{figure}
\begin{center}
\includegraphics[width=0.5\textwidth]{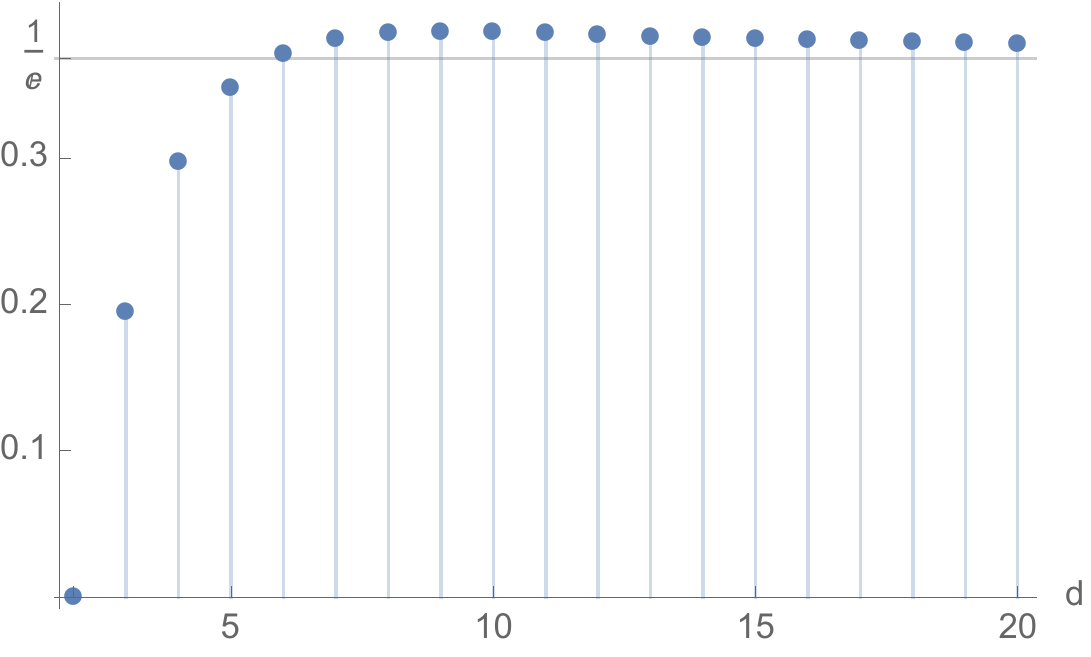}
\caption{A chart showing the value of the bound on $q(\mathcal{G})$ for quantum-mechanically accessible $\mathcal{G}$ as a function of the Hilbert space dimension $d$.}
\label{boundgraph}
\end{center}
\end{figure}
Above, it was shown that a spherical cap taking up a proportion of $(1-\nicefrac1d)^{d-1}$ of Hilbert space must capture at least one vector from every orthonormal basis. However, if for many bases this spherical cap captures more than one basis element, this might suggest that this spherical cap was not the most efficient way of choosing a vector from each basis. We note, though, that the exact geometry of the set of projectors will have a large effect on the efficiency of this construction. One one hand, an increase in the quantum dimension allows for more complicated structures and interplay of the permitted projectors that, as we shall demonstrate, seems to allow for construction of quantum graphs with very small independence number. However, it also has a dampening affect on the quantity $|T|$ since the bases themselves have more geometrical constraints. 

In fact such an annulus can in the worst case can contain every vector from a basis. Below, we will see a specific example of a set of vectors providing a nonlocality scenario, in which \emph{every} projector is captured by a pessimally-chosen spherical cap position. In $d$-dimensional Hilbert space, a spherical cap of proportion $(1-\nicefrac1n)^{d-1}$ can capture $n$ vectors out of a basis. So for any chosen $n'$, as $d$ increases, we see that the fraction of the spherical cap in the construction that cannot capture $n'$ of a basis tends to 0. However, if we consider the fraction of the spherical cap that can capture a constant proportion of basis elements $c$, we get the following behaviour under increasing $d$:
\begin{equation}
\lim_{d\rightarrow\infty} \left(1-\frac{1}{cd}\right)^{d-1}=e^{-\frac1c}.
\end{equation}
This might imply that the quality of the bound does not degrade as the dimension increases.


\begin{figure}
\begin{center}
\begin{tabular}{|c | c | c| c|c| }\hline
Name &  \makecell{Quantum \\Dimension}&\makecell{Transversal\\ Size} & \makecell{Independence \\Number} & \makecell{Figure of \\Merit}\\ \hline
\makecell{ABK Triangle-Free\\Graphs \cite{Alon2010}} &\ding{55}& $n-c\sqrt{n\log n}$ & $c\sqrt{n\log n}$  & $1-O\left(\frac{\log n}{n}\right) $\\\hline
\makecell{Quantum Upper\\ Bound} &$d$& $\leq n\left(1-\frac1d\right)^{d-1}$ & $\geq\frac{n}{2^{d-1}}$&$\leq 0.385838$ \\\hline
\makecell{$E_8$ root \\ system} &8&$\leq \frac{11n}{60}$& $\frac{n}{15}$& $\leq0.11\dot{6}$ \\ \hline
\makecell{Two-Qubit Stabiliser \\ Quantum Mechanics} &4&  $\frac{3n}{10}$& $ \frac{n}{5}$&0.1\\\hline
\makecell{Peres-Mermin \\ Magic Square \cite{Pere1991}} &4& $ \frac{7n}{24}$& $\frac{5n}{24}$ &$0.08\dot{3}$\\\hline
\makecell{Cabello's 18-ray \\  proof \cite{Cabe1997}} &4&$\frac{5n}{18}$&$\frac{4n}{18}$& $0.0\dot{5}$ \\\hline
\makecell{Peres's 33-ray \\  proof \cite{Pere1991}} &3&$\frac{9n}{33}$&$\frac{12n}{33}$& $0.\dot{0}\dot{3}$ \\\hline

\end{tabular}
\end{center}
\caption{A table comparing the graph constructions in this paper; each concrete example represents the graph union of multiple noninteracting copies of the KS proof. Two-qubit stabiliser quantum mechanics forms a KS proof using 60 rays and 105 contexts; the $E_8$ root system forms a KS proof using 120 rays and 2025 contexts. The quantities related to each were calculated using a proof by exhaustion, or are the best found during a non-exhaustive search, such as is the case for the $E_8$ system. It is perhaps of note that the lower bound of the size of a traversal less the independence number forms a trivial bound only for Peres's 33-ray proof, and in fact is optimal for Cabello's 18-ray proof and the Peres-Mermin magic square.}
\end{figure}
\subsection{Triangle-Free Graphs}

In triangle-free graphs, the maximum cliques are merely edges, and so each context consists of two measurement objects. Quantum mechanically, the only such graphs are uninteresting: the disjoint graph union of $K_2$ graphs. These do not have enough structure to form a proof of the Bell-Kochen-Specker theorem; this reflects the fact that each projector appears in only a single context. Many such graphs, then, are not quantum-mechanically accessible. However, they do allow us to prove a concrete bound on what values of $q(\mathcal{G})$ are achievable within the framework of generalised probabilistic theories. In such graphs, the concept of a maximum-clique hitting-set reduces to that of a covering of edges by vertices, known as a \emph{vertex cover}. The size of the minimal vertex cover of $\mathcal{G}$ is denoted $\tau(\mathcal{G})$. This reduction of our figure-of-merit allows us to make use of a bound derived for Ramsey type problems.

In the triangle-free case, we have a powerful relation between maximum-clique hitting-sets and independent sets; namely that they are graph complements of each other. A maximum-clique hitting-set, or vertex cover, is a set of vertices $T$ such that for every edge, at least one of that edge's vertices is in $T$. We can see, then, that the set $V-T$ must be independent, since if there were two vertices in $V-T$ connected by an edge, then neither of those vertices would be in $T$, a contradiction. Conversely, if we have an independent set $A$, then $V-A$ must be a vertex cover; if there were an edge with neither vertex in $V-A$, then both vertices are in $A$, a contradiction. This means that we can bound $q_s(\mathcal{G})$ as
\begin{equation}
q_s(\mathcal{G})\geq \min_\mathcal{G} |T| - \alpha(\mathcal{G}) = |V(\mathcal{G})|-2\alpha(\mathcal{G}).
\end{equation}
In other words, to be able to lower bound $q_s(\mathcal{G})$, we need only consider the independence number, $\alpha(\mathcal{G})$. We seek, therefore, triangle-free graphs with low independence number. We can invoke here a theorem due to Alon, Ben-Shimon and Krivelevich \cite{Alon2010}.
\begin{thm}
There exists a constant $c$ such that for all $n\in\mathbb{N}$ there exists a regular triangle-free graph $\mathcal{G}_n$, with $V(\mathcal{G}_n)=n$, and $\alpha(\mathcal{G}_n)\leq c\sqrt{n\log n}$.
\end{thm}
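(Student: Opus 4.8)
The plan is to read this as an extremal, Ramsey-type construction and to separate its two demands. We must exhibit a triangle-free graph whose independence number is as small as is possible up to constants, while \emph{additionally} insisting on exact regularity. Shearer's bound shows that every triangle-free graph on $n$ vertices satisfies $\alpha=\Omega(\sqrt{n\log n})$, so the exponent and the $\sqrt{\log n}$ factor demanded here are both best possible, and the construction must be essentially extremal. I would therefore first produce a \emph{nearly} regular triangle-free graph with the optimal independence number, and only then correct the residual degree irregularities. The independence-number estimate is the classical part; the regularity is the crux, so I expect that to be where the real work lies.

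For the near-regular base graph I would deliberately avoid a plain Erd\H{o}s--R\'enyi graph. In the regime where the degree must be $\Delta=\Theta(\sqrt{n\log n})$ (exactly the density at which Shearer's bound forces $\alpha=O(\sqrt{n\log n})$), the binomial random graph carries $\Theta\bigl((n\log n)^{3/2}\bigr)$ triangles, which exceeds its edge count by a factor of order $\log n$; no pruning by edge-deletion can then make it triangle-free. Instead I would invoke the triangle-free process (equivalently Kim's semi-random method), which builds a maximal triangle-free graph $H$ on $n$ vertices one random edge at a time. Its analysis simultaneously delivers all degrees equal to $(1+o(1))\tfrac{1}{\sqrt2}\sqrt{n\log n}$ and, by a first-moment estimate against the quasi-random edge distribution it produces, $\alpha(H)=O(\sqrt{n\log n})$. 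The decisive feature for what follows is that the degrees concentrate, so each vertex $v$ has only a lower-order deficiency $k_v=\Delta-\deg_H(v)=o(\Delta)$ from the common target $\Delta$.

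To regularize without damaging the independence number I would take two disjoint copies $H^{(1)},H^{(2)}$ of $H$ and add a bipartite ``connector'' $B$ between them in which $v^{(1)}$ and $v^{(2)}$ each receive exactly $k_v$ new edges, so that every vertex attains degree exactly $\Delta$. Since $B$ is bipartite, any triangle of the enlarged graph must consist of one edge $u^{(1)}v^{(1)}$ of a copy of $H$ together with two connector edges to a common vertex of the opposite copy; triangle-freeness is therefore \emph{equivalent} to requiring that the connector-neighbourhoods $N_B(u^{(1)})$ and $N_B(v^{(1)})$ be disjoint whenever $uv\in E(H)$, and symmetrically across the copies. I would assign these neighbourhoods greedily: when choosing the $k_v$ connector-neighbours of $v^{(1)}$ we need only avoid the $\sum_{u\sim v}k_u\le \Delta\cdot\max_w k_w$ vertices already forbidden by its $H$-neighbours. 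Because the degrees concentrate to within a relative error $n^{-\Omega(1)}$, this forbidden set has size $o(n)$, leaving room in the $n$-vertex opposite copy. Crucially, adding edges can only \emph{decrease} the independence number, so the resulting $\Delta$-regular triangle-free graph on $2n$ vertices inherits $\alpha\le 2\alpha(H)=O(\sqrt{n\log n})$, and the ratio is preserved.

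Finally, to realise \emph{every} order exactly, as the theorem requires a graph on precisely $n$ vertices for all $n$, I note that the process runs on any number of vertices, so the only rigid step is the doubling; for general $n$ I would combine two extremal pieces whose sizes sum to $n$ at a common target degree and join them by the same triangle-free connector, again only lowering $\alpha$. I expect the main obstacle to be exactly this connector: pinning all degrees to a single value while certifying that no two connector-neighbourhoods collide across an $H$-edge is the delicate point, and it is precisely here that one must exploit the sharp concentration of degrees. If one has only the weaker guarantee $\max_w k_w=o(\Delta)$, the greedy step should be replaced by a random bipartite connector with the prescribed degree sequence together with a switching/repair argument that deletes the few triangle-creating edges. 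By contrast, the independence-number estimate and the bookkeeping over residues of $n$ are routine.
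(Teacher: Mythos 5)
First, a point of order: the paper itself contains \emph{no} proof of this statement. It is quoted as a known theorem of Alon, Ben-Shimon and Krivelevich \cite{Alon2010} and used as a black box to conclude that $q(\mathcal{G}_n)\to 1$ is achievable in generalised probabilistic theories. So your proposal cannot be compared against a proof in the paper; it has to stand on its own (or against the cited source). Judged that way, your outline is sensible and several of its reductions are correct: Shearer's bound does make such graphs extremal; a binomial random graph at density $\Theta(\sqrt{\log n/n})$ really cannot be pruned to triangle-freeness; the two-copy trick does eliminate all triangles containing two $H$-edges (connector edges never join vertices of the same copy), so triangle-freeness of the augmented graph is exactly the neighbourhood-disjointness condition you state; and adding the connector can only decrease the independence number, so $\alpha\leq 2\alpha(H)$ survives.

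The genuine gap is at the step you yourself flag as the crux: the connector is never actually constructed. Two concrete problems. (i) The greedy accounting is wrong as stated: besides avoiding $\sum_{u\sim v}k_u$ vertices already used by $H$-neighbours of $v$, you must keep $N_B(v^{(1)})$ an independent set in $H^{(2)}$ (otherwise an edge of the second copy closes a triangle through $v^{(1)}$), and, more seriously, you must respect the capacities on the far side, since each $w^{(2)}$ must end with \emph{exactly} $k_w$ connector edges. A vertex-by-vertex greedy satisfying demands on one side has no reason to terminate with both prescribed degree sequences realised; this is a degree-constrained bipartite realisation problem with forbidden pairs, not a greedy. (ii) Even ignoring capacities, the greedy needs $\Delta\cdot\max_w k_w=o(n)$, i.e.\ relative degree spread $o(1/\log n)$ in $H$. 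You assert the triangle-free process concentrates degrees to relative error $n^{-\Omega(1)}$, but this is precisely the unproved, load-bearing input. Bohman's original analysis gives only $\Theta(\sqrt{n\log n})$ upper bounds on degrees (maximality of the final graph forces min degree only $\Omega(\sqrt{n/\log n})$), and the sharp analyses of Bohman--Keevash and Fiz Pontiveros--Griffiths--Morris give $(1+o(1))\sqrt{n\log n/2}$ for all degrees with error terms that are not obviously $o(1/\log n)$; if the spread is only $\Delta/\mathrm{polylog}$, then $\Delta\max_w k_w\gg n$ and the greedy fails outright. Your fallback --- a random bipartite connector with prescribed degrees plus a switching repair --- is where the real content of the theorem would live, and it is only gestured at: deleting triangle-creating edges destroys regularity again, so one needs switchings that restore both degree sequences without recreating triangles, plus the Gale--Ryser and parity bookkeeping needed to match deficiency sums on the two sides for arbitrary $n$. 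As it stands, the proposal is a plausible programme with correct scaffolding, but not a proof of the theorem.
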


Hence, we have
\begin{align}
\lim_{n\rightarrow\infty}q(\mathcal{G}_n)&\geq\lim_{n\rightarrow\infty}\frac{n-2c\sqrt{n\log n}}{n}\\
&\geq 1-2c\lim_{n\rightarrow\infty}\sqrt{\frac{\log n}{n}}=1.
\end{align}
Therefore, $q(\mathcal{G}_n)=1$ is approachable in the limit of $n\rightarrow\infty$. We see then that quantum mechanics does not display maximal contextuality as robustly, given this definition of robustness, as other possible GPTs. This could potentially have an impact on attempts to recreate quantum mechanics as a principle theory.

\FloatBarrier

\section{Higher-rank PVMs}\label{highrank}

We now extend the result to apply to projective valued measures including projectors of rank greater than 1, although we are still considering the case in which contexts are made up of a set of projectors which sum to the identity, rather than the more general case of a set of commuting projectors. We will allow a rank-$k$ projector $P$ to ``inherit'' a labelling as a function of the assignments to some $\left\{\ket{\psi_P^{(i)}}\right\}$, which form a specially chosen decomposition of $P$ as $P=\sum_i \ketbra{\psi_P^{(i)}}$.

Given a variable assignment to states $f:\mathcal{H}_d\rightarrow \{0,C,1\}$, we can define a variable assignment to all projection operators on $\mathcal{H}_d$, $g:\mathcal{P}(\mathcal{H}_d)\rightarrow \{0,C,1\}$ by the following:

\begin{equation}
g(P)=\begin{cases}
0\quad \mbox{if } \exists \left\{\psi_P^{(i)}\right\}, \>P=\sum_i \ketbra{\psi_P^{(i)}}, \>f\left(\ket{\psi_P^{(i)}}\right)=0\>\forall i,\\
1\quad \mbox{elif } \exists \left\{\psi_P^{(i)}\right\}, \>P=\sum_i \ketbra{\psi_P^{(i)}}, \>f\left(\ket{\psi_P^{(i)}}\right)\in\{0,1\}\>\forall i, \\
C\quad \mbox{o/w.}
\end{cases}
\end{equation}

The motivation for this notion of value inheritance is as follows: in any context containing $P$, we can consider replacing it with one of its maximally fine-grained decompositions, which then inherits value assignments from $f$ derived \emph{via} the annulus method. Since the assignment of values to this rank-1 decomposition of $P$ is necessarily consistent with the restrictions of the scenario, we can consider a post-processing that consists of a coarse-graining of those fine-grained results, ``forgetting'' which one of them occurred. This process must also be consistent. Hence, as long as there exists some decomposition of $P$ into rank-1 projectors, each of which is assigned a noncontextual valuation, then $P$ can inherit a noncontextual valuation. However if there must be a contextual vector in any such decomposition, we must treat $P$ as having a contextual valuation.

The challenge, then, is to characterise which projectors $P$ must be given a contextual valuation under this schema, and then prove a bound analogous to that of Theorem \ref{main}.

\begin{thm}
In a scenario in which the available measurements are rank-$r$ projectors $\{P_i\}$, peformed on a quantum system of dimension $d$, then at most $(I_{\nicefrac{1}{2}}(r,d-r)-I_{\nicefrac{r}{d}}(r,d-r))|\{P_i\}|$ are given a contextual valuation, where $I_x(\alpha,\beta)$ is the \emph{regularised incomplete beta function}.
\end{thm}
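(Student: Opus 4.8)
The plan is to reduce the whole question, as in Theorem \ref{main}, to a single real statistic and then bound the measure of its ``contextual'' band by a twirling argument. Fix a reference pure state $\ket{\psi}$ and, for each available rank-$r$ projector $P$, consider the overlap $s(P) = \bra{\psi} P \ket{\psi} = \tr(P\ketbra{\psi})$. First I would record the two structural facts that make $s$ the right quantity. In any rank-$r$ PVM context $\{P_1,\dots,P_{d/r}\}$ we have $\sum_j s(P_j) = \bra{\psi}\left(\sum_j P_j\right)\ket{\psi} = 1$, so by pigeonhole some $s(P_j) \geq r/d$; and if $s(P) > \nicefrac12$ and $s(Q) > \nicefrac12$ then $P$ and $Q$ cannot be orthogonal, since $PQ=0$ would force $\bra{\psi}(P+Q)\ket{\psi} \leq 1$. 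These are the exact rank-$r$ analogues of the pigeonhole and triangle-inequality facts from the rank-1 proof, with thresholds $r/d$ and $\nicefrac12$.

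The heart of the proof is to translate the value-inheritance definition of $g(P)$ into a condition on $s(P)$, showing that $P$ is forced to a contextual valuation exactly when $r/d \leq s(P) \leq \nicefrac12$. Write $P\ket{\psi} = \sqrt{s}\,\ket{u}$ with $\ket{u}$ a unit vector in the range of $P$; then for any orthonormal basis $\{\ket{e_j}\}$ of that range one has $|\braket{\psi}{e_j}|^2 = s\,|\braket{u}{e_j}|^2$, and as $\{\ket{e_j}\}$ varies over all such bases the weights $c_j = |\braket{u}{e_j}|^2$ range over all nonnegative tuples with $\sum_j c_j = 1$. Applying the rank-1 valuation $f$ of Theorem \ref{main} (with its thresholds $\nicefrac1d$ and $\nicefrac12$), the piece $\ket{e_j}$ is valued $0$ iff $s c_j < \nicefrac1d$ and valued $1$ iff $s c_j > \nicefrac12$. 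Hence $P$ admits an all-zero decomposition iff the $c_j$ can be made to sum to $1$ with every $s c_j < \nicefrac1d$, which is possible exactly when $s < r/d$; and it admits a decomposition valued in $\{0,1\}$ containing a $1$ iff some single weight can exceed $\nicefrac1{2s}$, which (as weights are at most $1$) requires $s > \nicefrac12$. Therefore $P$ is forced to the value $C$ precisely on the band $r/d \leq s(P) \leq \nicefrac12$, whose endpoints match the statement.

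It then remains to compute the proportion of rank-$r$ projectors in this band and to show $\ket{\psi}$ can be chosen so that the available $\{P_i\}$ do not overpopulate it. For the first, I would establish (the computation promised for this section, a special case of \.{Z}yczkowski and Sommers \cite{Zycz2000}) that for a Haar-random rank-$r$ projector the statistic $s$ is distributed as $\mathrm{Beta}(r,d-r)$: writing $\ket{\psi}$ in a Haar-random basis, $(|\psi_1|^2,\dots,|\psi_d|^2)$ is uniform on the simplex and $s$ is the sum of $r$ of its coordinates. Its cumulative distribution function is exactly $I_t(r,d-r)$, so the band $[r/d,\nicefrac12]$ has measure $I_{\nicefrac12}(r,d-r) - I_{r/d}(r,d-r)$; for $r=1$ this collapses to the $(1-\nicefrac1d)^{d-1} - 2^{-(d-1)}$ of Theorem \ref{main}, recovering the rank-1 cap formula. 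For the second, I would run the bump-function and Fubini twirl of Theorem \ref{main} verbatim, now with $SU(d)$ acting on the manifold of rank-$r$ projectors: averaging the indicator of the $s$-band over the Haar measure replaces it by the constant equal to its measure, so the expected number of $P_i$ in the band is $\left(I_{\nicefrac12}(r,d-r) - I_{r/d}(r,d-r)\right)|\{P_i\}|$, and some orientation of $\ket{\psi}$ attains at most this many contextual projectors.

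The step I expect to be the genuine obstacle is the characterization lemma, and within it the claim that every nonnegative weight tuple with unit sum is realized by an orthonormal basis of the range of $P$, together with the feasibility analysis of the two optimizations above; by contrast the twirling step is a direct transcription of Theorem \ref{main} and the distributional fact is standard. One should also flag the degenerate regime $r > \nicefrac d2$, in which no two rank-$r$ projectors can be orthogonal, the band $[r/d,\nicefrac12]$ is empty, and the bound correctly reads as non-positive.
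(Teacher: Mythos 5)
Your proposal is correct and takes essentially the same route as the paper: the same overlap statistic $\tr(P\ketbra{\psi})$, the same key lemma that $P$ is forced to a contextual value exactly on the band $\nicefrac{r}{d}\leq \tr(P\ketbra{\psi})\leq \nicefrac{1}{2}$ (the paper proves it via the two explicit decompositions, the equal-weight spread and the basis aligned with $P\ket{\psi}$, which are precisely the extreme cases of your weight-tuple feasibility argument), the same identification of the band's measure as $I_{\nicefrac{1}{2}}(r,d-r)-I_{\nicefrac{r}{d}}(r,d-r)$ from the $\mathrm{Beta}(r,d-r)$ law of the overlap, and the same Fubini/twirling finish. Your derivation of the Beta law from uniformity of $(|\psi_1|^2,\dots,|\psi_d|^2)$ on the simplex is an immaterial variant of the paper's appeal to \.{Z}yczkowski and Sommers followed by a change of variables.
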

\begin{proof}
The proof proceeds similarly to the proof of Theorem \ref{main}. We will identify a generalised annulus within the space of rank-$r$ projectors on $\mathcal{H}_d$ which can be associated with contextual valuations, and then use its volume in proportion to that of the overall Hilbert space to form a bound on how many must be given contextual valuations.
\begin{lem}
Under the extension of a valuation of rank-1 projectors given by the annulus method from Theorem \ref{main} centred on $\ket{\phi}$, a rank-$r$ projector $P_r$ is contextual only if $\nicefrac{r}{d}\leq \tr(P_r\ketbra{\phi})\leq\nicefrac12$.
\end{lem}
\begin{proof}[Proof of Lemma]

First, consider the case that $\tr(P_r\ketbra{\phi})>\nicefrac12$. We could equivalently write this as $\bra{\phi}P_r\ket\phi=\bra{\phi}P_rP_r\ket\phi>\nicefrac12$. We note that $P_r\ket\phi$ is a vector in the  eigenvalue-1 subspace of $P_r$. As such, we can find a basis for $P_r$ that includes $P_r\ket\phi$, up to a normalisation constant. By design, each of these other basis elements $\ket{i}$ will have $\braket{i}{\phi}=\bra{i}P_r\ket{\phi}=0$. Hence, this corresponds to a decomposition of $P_r$ in which every vector would be given either a $1$ or a $0$ valuation, and so $P_r$ inherits a $1$ valuation.

Next, we consider the case that $\tr(P_r\ketbra{\phi})=\nicefrac{r}{d}-\epsilon<\nicefrac{r}{d}$. We wish to decompose $P_r$ in such a way that each of the elements $\ket{i}$ of the decomposition have $\sprod{i}{\phi}<\nicefrac1d$. Again, we consider the vector $P_r\ket\phi$ and choose a basis for the 1-eigenspace of $P_r$ so that we have
\begin{equation}
P_r\ket\phi=\sum_{i=0}^{r-1}\left(\sqrt{\frac1d-\frac{\epsilon}{r}}\right)\ket{i}.
\end{equation}
Each of these elements in the decomposition would recieve a noncontextual 0 valuation, so $P_r$ inherits a 0 valuation. This concludes the proof.
\end{proof}

Using the result of the lemma, we can apply a proof method identical to that in Theorem \ref{main} if we can calculate the proportion of the Hilbert space taken up by the set $\left\{P_r\middle|\nicefrac{r}{d}\leq \tr(P_r\ketbra{\phi})<\nicefrac12\right\}$.

As before, entries of a uniformly random unit vector in $\mathbb{C}^d$ have the same distribution as the entries of a column in a Haar-random $d\times d$ unitary matrix. Applying a result of \.{Z}yczkowski and Sommers \cite{Zycz2000}, if $U$ is a Haar-random $d\times d$ unitary, then defining $Y=\sqrt{\sum_{k=1}^r\left|U_{k,1}\right|^2}$, we have
\begin{equation}
P^Y_{d,r}(y)=c_{d,r}y^{2r-1}(1-y^2)^{d-r-1},
\end{equation}
where $P^Y_{d,r}(y)$ is the probability density function for the random variable $Y$, and
\begin{equation}
c_{d,r}=\frac{2}{B\left(r,d-r\right)}=\frac{2\Gamma(d)}{\Gamma(r)\Gamma(d-r)},
\end{equation}
in which $B$ is the Euler beta function. and $\Gamma$ is the gamma function. Performing a change of variables, then, in which $T=Y^2$, we get: 
\begin{align}
P^T_{d,r}(t)&=\left|\frac{\dif}{\dif t}\left(\sqrt{t}\right)\right|P^Y_{d,r}\left(\sqrt{t}\right)\\
&=\frac{t^{r-1}\left(1-t\right)^{d-r-1}}{B(r,d-r)}
\end{align}
This is exactly a Beta distribution with shape parameters $\alpha=r$ and $\beta=d-r$; we have derived that $T\sim\mbox{Beta}(r,d-r)$. We note taking $r=1$, and integrating, we recover our earlier result used in Theorem \ref{main}. Since the CDF for a Beta distributed random variable is given by the regularised incomplete beta function, $I_x(\alpha,\beta)$, the proportion of the Hilbert space taken up by the set $\left\{P_r\middle|\nicefrac{r}{d}\leq \tr(P_r\ketbra{\phi})\leq\nicefrac12\right\}$ is given by
\begin{equation}
I_{\nicefrac{1}{2}}(r,d-r)-I_{\nicefrac{r}{d}}(r,d-r).
\end{equation}
To complete the proof we once again apply the argument from Fubini's theorem used in the proof of Theorem \ref{main}.
\end{proof}
\begin{cor}
For any choice of $d$, $r$, the proportion of projectors requiring a contextual valuation is always less than \nicefrac12.
\end{cor}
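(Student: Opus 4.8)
The plan is to show that the proportion of projectors receiving a contextual valuation, namely $I_{\nicefrac{1}{2}}(r,d-r)-I_{\nicefrac{r}{d}}(r,d-r)$, is bounded above by $\nicefrac12$. Since this quantity is a difference of two values of the cumulative distribution function (CDF) of a $\mathrm{Beta}(r,d-r)$ random variable $T$, it equals the probability that $T$ lands in the interval $[\nicefrac{r}{d},\nicefrac12]$. My first step is to observe that any difference of CDF values of a single distribution satisfies
\begin{equation}
I_{\nicefrac{1}{2}}(r,d-r)-I_{\nicefrac{r}{d}}(r,d-r)=\Pr\left[\frac{r}{d}\leq T\leq\frac12\right]\leq 1,
\end{equation}
but of course we must do better and get strictly below $\nicefrac12$. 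The natural strategy is to locate the mean or median of the distribution and argue that the interval $[\nicefrac{r}{d},\nicefrac12]$ captures less than half of the mass.

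The key observation I would exploit is that the mean of $T\sim\mathrm{Beta}(r,d-r)$ is exactly $\frac{r}{r+(d-r)}=\frac{r}{d}$, which coincides with the lower endpoint of our interval. So the interval starts precisely at the mean. The cleanest route is then to split the probability mass at the mean: writing $\mu=\nicefrac{r}{d}$, I would argue that
\begin{equation}
\Pr\left[\mu\leq T\leq\tfrac12\right]\leq\Pr\left[T\geq\mu\right],
\end{equation}
and then show that $\Pr[T\geq\mu]<\nicefrac12$, i.e.\ that the mean lies strictly below the median. For a Beta distribution this is a standard fact when the distribution is right-skewed; since the Beta density $t^{r-1}(1-t)^{d-r-1}$ is unimodal, the relationship between mean, median, and mode determines the skew. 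The main obstacle will be handling the skew direction carefully across all parameter regimes: when $r<d-r$ the mode sits below the mean and the distribution is right-skewed so the median exceeds the mean (giving $\Pr[T\geq\mu]<\nicefrac12$ comfortably), but the symmetric and $r>d-r$ cases need separate attention, and the boundary case $\nicefrac12<\nicefrac{r}{d}$ makes the interval empty and the bound trivial.

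To make the argument uniform and avoid a case analysis on skewness, I would instead prefer the following symmetry-based route, which I expect to be the cleanest. The reflection $t\mapsto 1-t$ sends a $\mathrm{Beta}(r,d-r)$ variable to a $\mathrm{Beta}(d-r,r)$ variable, so
\begin{equation}
I_{\nicefrac12}(r,d-r)=1-I_{\nicefrac12}(d-r,r).
\end{equation}
Using this together with monotonicity of the CDF, one reduces the claim to a statement comparing the mass on $[\nicefrac{r}{d},\nicefrac12]$ against its complement, and the endpoint $\nicefrac{r}{d}=\mu$ is exactly where the density's behaviour pivots. The hard part will be rigorously establishing the mean-median inequality for the Beta distribution in the regime that actually arises (namely whenever the interval is nonempty, so $\nicefrac{r}{d}<\nicefrac12$, equivalently $r<d-r$), where right-skewness does hold; this is precisely the well-documented case and can be cited or derived from the unimodality of the density and the position of the mode at $\frac{r-1}{d-2}$ relative to the mean $\frac{r}{d}$. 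Once the mean-median inequality is in hand, the chain $\Pr[\mu\leq T\leq\nicefrac12]\leq\Pr[T\geq\mu]<\nicefrac12$ closes the proof, with the degenerate case $r\geq d-r$ giving an empty or measure-zero interval and hence a bound of $0<\nicefrac12$.
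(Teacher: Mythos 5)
Your overall strategy is the paper's own: interpret $I_{\nicefrac12}(r,d-r)-I_{\nicefrac{r}{d}}(r,d-r)$ as the mass that $T\sim\mathrm{Beta}(r,d-r)$ places on the interval $[\nicefrac{r}{d},\nicefrac12]$, observe that the left endpoint $\nicefrac{r}{d}$ is exactly the mean $\mu$, and then invoke a mean--median comparison to conclude that at most half of the mass lies to the right of $\mu$. The paper does precisely this, citing the mean--median--mode inequality for the Beta distribution with $\alpha\leq\beta$ (which holds here because $r$ is a nontrivial factor of $d$, so $r\leq d-r$): it gives $m(r,d-r)\leq\nicefrac{r}{d}$, hence $I_{\nicefrac{r}{d}}(r,d-r)\geq\nicefrac12$ by monotonicity of the CDF, with strictness supplied by $I_{\nicefrac12}(r,d-r)<1$.

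However, you state the key inequality backwards, twice. You write that $\Pr[T\geq\mu]<\nicefrac12$ means ``the mean lies strictly below the median,'' and later that for $r<d-r$ the distribution is right-skewed ``so the median exceeds the mean.'' Both are reversed: $\Pr[T\geq\mu]<\nicefrac12$ is equivalent to the mean lying strictly \emph{above} the median, and for a right-skewed unimodal distribution the ordering is mode $<$ median $<$ mean. If the median really did exceed the mean, your chain would break, since then $\Pr[T\geq\mu]\geq\Pr[T\geq m]=\nicefrac12$, the opposite of what you need. Fortunately, the true fact is the one your final chain requires --- for $\mathrm{Beta}(r,d-r)$ with $r<d-r$ the median lies strictly below the mean $\nicefrac{r}{d}$ --- so the proof is repaired simply by swapping the direction everywhere it appears. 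Alternatively, even the non-strict bound $m\leq\mu$ suffices, because the interval $[\mu,\nicefrac12]$ omits the strictly positive mass on $(\nicefrac12,1)$, giving $\Pr[\mu\leq T\leq\nicefrac12]<\Pr[T\geq\mu]\leq\nicefrac12$; this is essentially where the paper places its strictness. With that correction, your argument and the paper's proof coincide.
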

\begin{proof}
We have for any $d$, $r$, that the proportion of projectors requiring a contextual valuation is bounded above by $I_{\nicefrac{1}{2}}(r,d-r)-I_{\nicefrac{r}{d}}(r,d-r)$. We wish, then, to bound this quantity above by $\nicefrac12$.

Since $I_x(\alpha,\beta)$ is the cumulative distribution function for a random variable with distribution $\mbox{Beta}(\alpha,\beta)$, a trivial upper bound for this quantity is given by
\begin{equation}
I_{\nicefrac{1}{2}}(r,d-r)-I_{\nicefrac{r}{d}}(r,d-r)< 1-I_{\nicefrac{r}{d}}(r,d-r).
\end{equation}
In fact for large $d$, this is a reasonable approximation, since we can apply Chebyshev's inequality to show that $I_{\nicefrac{1}{2}}(r,d-r)\geq1-O(d^{-2})$. We wish, then, to lower bound the value of $I_{\nicefrac{r}{d}}(r,d-r)$. The mean-median-mode inequality \cite{Kerm2011} for the Beta distribution with $\alpha\leq\beta$, which is met here since $r$ must be a nontrivial factor of $d$, is given by
\begin{equation}
\frac{\alpha-1}{\alpha+\beta-2}\leq m(\alpha,\beta)\leq\frac{\alpha}{\alpha+\beta},
\end{equation}
where $m(\alpha,\beta)$ represents the median of a random variable with a $\mbox{Beta}(\alpha,\beta)$ distribution. This is an equality only when $\alpha=\beta$; although this corresponds to a trivial contextuality scenario. Importantly, we have
\begin{equation}
m(r,d-r)\leq\frac{r}{d}.
\end{equation}
Hence by the monotonicity of $I_x(\alpha,\beta)$, we have $I_{\nicefrac{r}{d}}(r,d-r)\geq I_{m(r,d-r)}(r,d-r)=\nicefrac12$, and the result follows.

\end{proof}

\section{Future Work}
There is a large gulf between the best-known realisation of a quantum-mechanically accessible BKS proof and the bounds on the robustness of such a proof that we have derived in this article. However, it may be that to prove a tighter bound on what is quantum-mechanically possible in a way similar to that of this paper is mathematically difficult; this is because the method requires that we find a set that is independent, regardless of the specific structure of the set of vectors in question. This implies bounds on the independence number of the associated orthogonality graph for all finite contextuality scenarios, and by extension also the contextuality scenario that includes every quantum state. This problem, to choose the largest set of points on a hypersphere such that no two points in the set are orthogonal, is essentially a problem known as Witsenhausen's problem \cite{Wits1974}. For Witsenhausen's problem, it is conjectured \cite{DeCo2015} by Gil Kalai that the spherical-cap configuration used in this proof is in fact optimal, although this is not even known for the original three-real-dimensional case considered by Kochen and Specker. In turn, the solution to Witsenhausen's problem implies lower bounds for the Hadwiger-Nelson problem \cite{DeCo2015}: the number of colours needed to colour Euclidean spaces if no two points distance 1 away from each other can have the same colour. The solution to this problem is thought to depend on the specific model of ZF set theory adopted \cite{Soif2008}.

 In this paper, we considered contexts formed of projective matrices that sum to the identity. Recently, attention has been drawn to an analogous problem defined for sets of mutually commuting Hermitian matrices, and so an extension of variant of this result in that case would be helpful for bounding the amount of classical memory needed for an unbiased weak simulation of quantum subtheories \cite{Kara2017}. 

\section{Acknowledgements}
Special thanks to Mordecai Waegell, who suggested using the $E_8$ root system and two-qubit stabiliser quantum mechanics to demonstrate lower bounds on what values of the figure of merit in Section \ref{sec:main} were quantum-mechanically accessible, as well as his help calculating this lower bound.
Special thanks also to Terry Rudolph for asking this question in the first place.
I am grateful to Angela Xu, Angela Karanjai and Baskaran Sripathmanathan for their helpful discussions which have guided this project. I acknowledge support from EPSRC, \emph{via} the CDT in Controlled Quantum Dynamics at Imperial College London; and from Cambridge Quantum Computing Limited. This research was supported in part by Perimeter Institute for Theoretical Physics. Research at Perimeter Institute is supported by the Government of Canada through the Department of Innovation, Science and Economic Development and by the Province of Ontario through the Ministry of Research and Innovation.

\bibliography{/Users/andrewsimmons/Documents/Latex/Bib/bibliography}{}
\bibliographystyle{plain}

\end{document}